\documentclass[11pt,a4paper]{article}

\usepackage{float}
\usepackage[pagebackref,plainpages=false,colorlinks,linkcolor=blue,anchorcolor=teal,citecolor=blue]{hyperref}
\usepackage{amsmath} 
\usepackage{amsthm} 
\usepackage{amssymb}	
\usepackage{graphicx} 
\usepackage{multicol} 
\usepackage{multirow}
\usepackage[dvipsnames]{xcolor}
\usepackage[margin=1in,bottom=1in]{geometry}
\usepackage[capitalize,noabbrev]{cleveref}

\allowdisplaybreaks[1]

\usepackage[utf8]{inputenc}
\usepackage[english]{babel}

\usepackage{diagbox}
\usepackage{mathtools}

\usepackage{xspace}
\usepackage[T1]{fontenc}
\AtBeginDocument{%
  \DeclareFontShape{T1}{cmr}{m}{scit}{<->ssub*cmr/m/sc}{}%
}
\usepackage{comment}
\usepackage{authblk}

\usepackage{adjustbox}
\usepackage{footnotehyper} 
\makesavenoteenv{table}  

\usepackage{enumitem}
\usepackage{makecell}
\usepackage{threeparttable}
\usepackage{tcolorbox}
\usepackage{tabularx}
\usepackage{booktabs}

\usepackage{thmtools}
\declaretheorem[style=plain,numberwithin=section]{theorem}
\declaretheorem[style=plain,numberlike=theorem]{lemma,corollary}

\declaretheorem[style=plain,numberlike=theorem]{proposition}

\declaretheorem[style=definition]{problem,conjecture}

\usepackage{algorithm}
\usepackage{algpseudocode}

\usepackage{tikz}
\usetikzlibrary{arrows.meta,calc,positioning}

\DeclarePairedDelimiter\rbra{\lparen}{\rparen}
\DeclarePairedDelimiter\sbra{\lbrack}{\rbrack}
\DeclarePairedDelimiter\cbra{\{}{\}}

\DeclarePairedDelimiter{\ket}{\lvert}{\rangle}
\DeclarePairedDelimiter{\bra}{\langle}{\rvert}
\DeclarePairedDelimiter{\abs}{\lvert}{\rvert}
\DeclarePairedDelimiter{\norm}{\lVert}{\rVert}
\newcommand{\ketbra}[2]{\ket*{#1}\!\bra*{#2}}
\newcommand{\braket}[2]{\left< #1 \vphantom{#2} \middle| #2 \vphantom{#1} \right>} 

\newcommand{\binset}{\{0,1\}}

\newcommand{\tr}{\operatorname{tr}}
\newcommand{\rank}{\operatorname{rank}}
\newcommand{\poly}{\operatorname{poly}}
\newcommand{\sign}{\operatorname{sgn}}

\newcommand{\BQP}{\textnormal{\textsf{BQP}}\xspace}
\newcommand{\QSZK}{\textnormal{\textsf{QSZK}}\xspace}

\newcommand{\F}{\mathrm{F}}

\newcommand{\eps}{\varepsilon}

\newcommand{\SV}{\mathrm{(SV)}}
\newcommand{\Uhl}{\mathrm{Uhl}}
\newcommand{\SRAE}{\textsc{SqrtAmpEst}}

\makeatletter
\def\@buildmath#1{%
  \expandafter\def\csname bb#1\endcsname{\ensuremath{\mathbb{#1}}}%
  \expandafter\def\csname bf#1\endcsname{\ensuremath{\mathbf{#1}}}%
  \expandafter\def\csname sf#1\endcsname{\ensuremath{\mathsf{#1}}}%
  \expandafter\def\csname cal#1\endcsname{\ensuremath{\mathcal{#1}}}%
  \expandafter\def\csname rm#1\endcsname{\ensuremath{\mathrm{#1}}}%
  \expandafter\def\csname tt#1\endcsname{\ensuremath{\mathtt{#1}}}%
}
\def\@buildmathletters#1{%
  \ifx#1\relax\else
    \@buildmath{#1}%
    \expandafter\@buildmathletters
  \fi
} 
\@buildmathletters ABCDEFGHIJKLMNOPQRSTUVWXYZabcdefghijklmnopqrstuvwxyz\relax
\makeatother

\begin{document}
\setlength{\abovedisplayskip}{6pt}
\setlength{\belowdisplayskip}{6pt}

\title{Optimal fidelity estimation when one state is pure via algorithmic Uhlmann transform}

\author[1]{Yupan Liu\thanks{Email: \href{mailto:yupan.liu@epfl.ch}{yupan.liu@epfl.ch}}}
\author[2]{Qisheng Wang\thanks{Email: \href{mailto:QishengWang1994@gmail.com}{QishengWang1994@gmail.com}}}
\affil[1]{School of Computer and Communication Sciences, \'Ecole Polytechnique F\'ed\'erale de Lausanne}
\affil[2]{School of Computer Science, Shanghai Jiao Tong University}
\date{}

\maketitle

\begin{abstract}
The Uhlmann fidelity ${\rm F}(\rho_0,\rho_1) = {\rm tr}|\sqrt{\rho_0}\sqrt{\rho_1}|$ is one of the most fundamental quantities in quantum information theory for quantifying the closeness between two quantum states. Estimating the Uhlmann fidelity to within additive error $\varepsilon$ requires a number of copies of the states, or queries to their state-preparation circuits, that depends at least linearly on the smaller of the ranks of $\rho_0$ and $\rho_1$. Consequently, this rank dependence disappears when either state is pure, in which case the query and sample complexities depend only polynomially on $1/\varepsilon$. However, the known optimal estimator for ${\rm F}(\rho,\ket{\psi}\!\bra{\psi})$ due to \hyperlink{cite.FW25}{Fang and Wang (ESA 2025)} requires prior knowledge of which state is pure. 

In this work, we remove this mathematically unnecessary prior-knowledge requirement and establish an optimal estimator for ${\rm F}(\rho,\ket{\psi}\!\bra{\psi})$ under the sole promise that one of the two states is pure, without knowing which one. Our estimator is obtained by specializing the refined algorithmic Uhlmann transform of \hyperlink{cite.UNWT25}{Utsumi, Nakata, Wang, and Takagi (2025)} to the case where one state is pure. In this setting, the Uhlmann fidelity can be recovered as follows: apply a unitary dilation of ${\rm tr}_{\sf A}(\ket{\psi_0}\!\bra{\psi_1})$ (or its inverse) to the reference register $\sf R$ of the purification $\ket{\psi_1}$ (or $\ket{\psi_0}$) on the registers $\sf A$ and $\sf R$, estimate the corresponding square-root amplitude in each case, and take the maximum of the resulting two estimates. 
\end{abstract}

\section{Introduction}
\label{sec:introduction}

The (square-root) Uhlmann fidelity $\F(\rho_0,\rho_1)$, originally introduced by Armin Uhlmann~\cite{Uhlmann76}, is one of the most fundamental measures of closeness between two quantum states:
\[ \F(\rho_0,\rho_1) \coloneqq \tr\abs*{\sqrt{\rho_0}\sqrt{\rho_1}} = \tr\sqrt{\sqrt{\rho_0}\rho_1\sqrt{\rho_0}}. \]
More specifically, Uhlmann fidelity has found applications in various areas within and related to quantum information science, including quantum communication and teleportation~\cite{Schumacher96,BBP+96,HHH99}, the approximate quantum Markov property~\cite{FR15,JRS+18}, information-theoretically secure quantum cryptography~\cite{Mayers97,SP00,TLGR12}, and quantum complexity theory and interactive proofs~\cite{KW00,Watrous02,MY23,BMY25}.

This quantity equals $1$ when the states agree and equals $0$ when their supports are orthogonal. When one state is pure, say $\rho_0=\rho$ and $\rho_1=\ketbra{\varphi}{\varphi}$, the fidelity takes the simplified form
\[ \F(\rho,\ketbra{\varphi}{\varphi}) = \sqrt{\bra{\varphi}\rho\ket{\varphi}} = \sqrt{\tr\rbra*{\rho \ketbra{\varphi}{\varphi}}}. \]
Consequently, when one state is pure, the fidelity can be estimated to constant precision using a constant number of copies of $\rho$ and $\ketbra{\varphi}{\varphi}$ via the SWAP test~\cite{BCWdW01}, yielding an efficient quantum estimator. 
In contrast, for general mixed states, the relation between Uhlmann fidelity and trace distance~\cite{FvdG99} implies that fidelity estimation does not admit an efficient quantum estimator under a standard complexity-theoretic assumption~\cite{Watrous02,Watrous09}.\footnote{The promise problem of estimating the trace distance to within additive error $\eps$ is \QSZK{}-hard~\cite{Watrous02,Watrous09}, and the Fuchs--van Graaf inequalities~\cite{FvdG99} imply that the analogous fidelity-estimation problem is \QSZK{}-hard as well. Therefore, the assumption $\BQP \subsetneq \QSZK$ rules out efficient quantum estimators for both tasks.}

\subsection{Fidelity estimation: the main result and prior work}
In the fidelity estimation task, a quantum estimator is given access to the quantum states $\rho_0$ and $\rho_1$ and outputs an estimate of $\F(\rho_0,\rho_1)$ to within additive error $\eps$. With \emph{purified query access}, an estimator can query state-preparation circuits (including their inverses and controlled versions) for the states, each of which prepares a purification of the corresponding state. With \emph{sample access}, by contrast, an estimator simply receives copies of the states and may measure them collectively.

In this work, we investigate fidelity estimation when one of the states is pure. In the most natural scenario, in which there is no prior knowledge of which state is pure, one-pure-state fidelity estimation can be achieved via the SWAP test~\cite{BCWdW01} with sample complexity $O(1/\varepsilon^4)$ and query complexity  $O(1/\varepsilon^2)$ by combining the SWAP test with quantum amplitude estimation~\cite{BHMT02}. 
Prior to our work, improved query and sample complexities required \emph{additional promises}:
\begin{itemize}
    \item When both states are pure, a query-optimal approach to fidelity estimation was presented in~\cite{Wang24} with query complexity $\Theta(1/\eps)$. 
    \item When it is known \emph{in advance} which state is pure, a query-optimal approach to one-pure-state fidelity estimation was later established in~\cite{FW25}. 
\end{itemize}

The optimal sample complexity $\Theta(1/\eps^2)$ for both tasks was likewise established in those works~\cite{WZ24c,FW26}. 
Here, we remove the requirement for prior knowledge of which state is pure and establish a query-optimal approach to one-pure-state fidelity estimation, quadratically improving the SWAP-test-based query-complexity upper bound in this setting:

\begin{theorem}[Optimal one-pure-state fidelity estimation, informal version of \Cref{thm:query-onePureState-fidelity}]
\label{thm:optimal-one-pure-state-fidelity-informal}
Given purified query access to two quantum states $\rho_0$ and $\rho_1$, under the promise that at least one of them is pure, there is a quantum estimator that, without prior knowledge of which state is pure, estimates $\F(\rho_0,\rho_1)$ to within additive error $\eps$ with query complexity $\Theta(1/\eps)$. 
\end{theorem}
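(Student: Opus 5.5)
The estimator runs two candidate procedures, one per hypothesis about which state is pure, and outputs the maximum of the two estimates. The lower bound $\Omega(1/\eps)$ I would inherit from the pure-state case~\cite{Wang24}, since a pair of pure states satisfies the promise; so the work is the upper bound $O(1/\eps)$. Let $U_0,U_1$ prepare purifications $\ket{\psi_0},\ket{\psi_1}$ on registers $\sA\sR$, with $\rho_b=\tr_{\sR}\ketbra{\psi_b}{\psi_b}$. By Uhlmann's theorem, $\F(\rho_0,\rho_1)=\max_{V}\abs{\bra{\psi_0}(I_\sA\otimes V_\sR)\ket{\psi_1}}$, and the optimum is attained by the polar part of $X\coloneqq\tr_\sA(\ket{\psi_0}\!\bra{\psi_1})$: indeed $\F=\norm{X}_1$, since $X$'s singular values match those of $\sqrt{\rho_0}\sqrt{\rho_1}$ (up to transposes on $\sR$).

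The key step is to build, with $O(1)$ queries, a block-encoding $W$ of $X$ (the standard construction: prepare $\ket{\psi_1}$, reflect/project with $U_0^\dagger$ on $\sA$ plus ancilla, in the style of the algorithmic Uhlmann transform of~\cite{UNWT25}). Then consider two experiments.

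\emph{Case $\rho_0$ pure.} Here $\ket{\psi_0}=\ket{\varphi}_\sA\otimes\ket{\phi}_\sR$, so $X=\ket{\phi}\!\bra{\xi}$ has rank one, where $\ket{\xi}=(\bra{\varphi}\otimes I)\ket{\psi_1}$ has conjugate norm $\norm{\xi}=\sqrt{\bra{\varphi}\rho_1\ket{\varphi}}=\F$. Applying $W$ to the $\sR$ register of $\ket{\psi_1}$ and projecting onto $\ket{\psi_0}$ with the flag in $\ket{0}$ gives amplitude $\bra{\psi_0}(I\otimes X)\ket{\psi_1}$. For rank-one $X$ this equals $\norm{\xi}^2=\F^2$. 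Its square root is obtained by estimating the square-root amplitude ($\SRAE$) to precision $\eps$ with $O(1/\eps)$ queries, and this is exactly the quantity the pure-case estimator of~\cite{FW25} estimates.

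\emph{Case $\rho_1$ pure.} Symmetrically, apply $W^\dagger$ (a block-encoding of $X^\dagger$) to $\sR$ of $\ket{\psi_0}$ and measure the overlap with $\ket{\psi_1}$, obtaining $\F^2$ again, estimated via $\SRAE$.

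The main obstacle is soundness of the max: I must show that in the \emph{wrong} case each estimate never exceeds $\F$ (up to $\eps$), while the correct case attains it. For general $X$, the quantity $\abs{\bra{\psi_0}(I\otimes X)\ket{\psi_1}}=\abs{\tr(XX^\dagger)}=\norm{X}_2^2$. Since $\norm{X}_{\mathrm{op}}\le1$, we have $\norm{X}_2^2\le\norm{X}_1$, and the same holds for $X^\dagger$. So the square root of each estimate is at most $\sqrt{\norm{X}_1}$. That is not $\le\F$, so I would instead define each procedure's output as the raw amplitude $a_b=\norm{X}_2^2$ and use the rank-one case, where $\norm{X}_2^2=\norm{X}_1^2$. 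Concretely, I would check the inequality $\norm{X}_2^2\le\norm{X}_1^2=\F^2$ (true, since $\norm{X}_2\le\norm{X}_1$), with equality when either state is pure (rank of $X$ is $\le\min$ rank). Thus both experiments yield estimates of $\sqrt{a}=\norm{X}_2\le\F$, with equality under the promise. Estimating $\norm{X}_2$ to additive $\eps$ via $\SRAE$ costs $O(1/\eps)$ queries. Taking the maximum (in fact one experiment suffices if the identity is symmetric; I would keep both to match the refined transform's behavior), with success-probability boosting by median-of-means, gives the claimed bound.
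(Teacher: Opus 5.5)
There is a genuine gap, and it sits in the step you treated as routine: the quantity your experiment hands to \SRAE{} is $\F^2$, not $\F$. You post-select on $\ket{\psi_0}$ together with the flag, so the projector is rank one. \Cref{lem:srae} then returns $\norm{\Pi U\ket{\bar 0}}=\abs{\bra{\psi_0}(I^{\sfA}\otimes X)\ket{\psi_1}}$, which is the overlap itself. Your identity $\bra{\psi_0}(I^{\sfA}\otimes X)\ket{\psi_1}=\tr(XX^\dagger)=\norm{X}_2^2$ is correct. So under the promise, \SRAE{} gives an $\eps$-estimate of $a=\norm{X}_2^2=\F^2$. The acceptance probability of your circuit is $a^2$, so \SRAE{} estimates $a$, not $\sqrt{a}=\norm{X}_2$.

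Turning an $\eps'$-accurate estimate of $\F^2$ into an estimate of $\F$ only guarantees error $\sqrt{\eps'}$ when $\F$ is small. For example, telling $\F=0$ from $\F=2\eps$ means telling overlap $0$ from overlap $4\eps^2$. You therefore need $\eps'\approx\eps^2$, i.e., $O(1/\eps^2)$ queries, which is no better than the SWAP test with amplitude estimation. Your two experiments also return the same number $\tr(XX^\dagger)$, so the maximum does no work. That is a sign you are estimating the wrong quantity.

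The missing idea is to not project onto the other purification at all. Apply the dilation $W$ to the purifying register of $\ket{\psi_1}$ and post-select only on the dilation's ancilla being $\ket{\bar 0}$. Leave $\sfA$ and the output register $\sfS$ unmeasured: these are the circuits $U_1,U_0$ with $\Pi=I^{\sfA\sfS}\otimes\ketbra{\bar 0}{\bar 0}^{\sfA'\sfR'}$. The square-root amplitude is then the vector norm $a_1=\norm{(I^{\sfA}\otimes X)\ket{\psi_1}}$, which is linear rather than quadratic in $X$.

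If $\rho_1=\ketbra{\varphi_1}{\varphi_1}$, then $X=\ketbra{\iota_1}{\eta_1}$ and $(I^{\sfA}\otimes X)\ket{\varphi_1}\ket{\eta_1}=\ket{\varphi_1}\ket{\iota_1}$, which has norm exactly $\norm{\ket{\iota_1}}=\F$. Symmetrically, $a_0=\norm{(I^{\sfA}\otimes X^\dagger)\ket{\psi_0}}=\F$ if $\rho_0$ is pure. Soundness of the maximum comes from $a_j\le\norm{X}=\F$, because $X$ has rank at most one with operator norm $\F$. No Schatten-norm comparison is involved.

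Unlike your overlaps, $a_0$ and $a_1$ genuinely differ. If only $\rho_1$ is pure, then $a_0^2=\bra{\iota_1}\sigma_0\ket{\iota_1}<\F^2$ whenever $\F>0$, where $\sigma_0$ is the mixed reduced state of $\ket{\psi_0}$ on $\sfR$. This is why both runs and the maximum are needed when the pure side is unknown. Your lower-bound reduction to the pure–pure case is fine.
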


Consequently, combining \Cref{thm:optimal-one-pure-state-fidelity-informal} with the quantum sample-to-query lifting techniques in~\cite[Theorem 1.5]{TWZ25} (see also~\cite[Theorem 1.1]{CWZ25}) yields an estimator for one-pure-state fidelity estimation with sample complexity $\Theta(1/\eps^2)$, also quadratically improving the SWAP-test-based sample-complexity upper bound in this setting.  

\begin{table}[!ht]
\centering
\begin{threeparttable}
\caption{Query and sample complexities of fidelity estimation.}
\label{table:fidelity-quantitative-bounds}
\begin{tabular}{cccccc}
    \toprule 
    & \multirow{2}{*}{Pure states} & \multicolumn{2}{c}{One pure state} & \multicolumn{2}{c}{\multirow{2}{*}{\makecell{Mixed states\\ \footnotesize{(minimum rank $r$)}}}}\\
    \cmidrule{3-4}
    & & \footnotesize{Pure side known} & \footnotesize{Pure side unknown} & \\
    \midrule
    \multirow{2}{*}{\makecell{Query\\ complexity}} & \multirow{2}{*}{\makecell{$\Theta(1/\eps)$\\\footnotesize{\cite{Wang24}}}} & \multirow{2}{*}{\makecell{$\Theta(1/\eps)$\\\footnotesize{\cite{FW25}}}} & \multirow{2}{*}{\makecell{$\Theta(1/\eps)$\\\footnotesize{\Cref{thm:query-onePureState-fidelity}}}} & \makecell{\scriptsize{Upper}\\\scriptsize{Bound}} & \makecell{$\widetilde{O}(r/\eps^2)$\\\footnotesize{\cite{UNWT25}}}\\
    \cmidrule{5-6}
    & & & & \makecell{\scriptsize{Lower}\\\scriptsize{Bound}} & \makecell{$\widetilde{\Omega}(r/\eps)$\\\footnotesize{\cite{Wang2026Nearly}}} \\
    \midrule
    \multirow{2}{*}{\makecell{Sample\\ complexity}} & \multirow{2}{*}{\makecell{$\Theta(1/\eps^2)$\\\footnotesize{\cite{WZ24c}}}} & \multirow{2}{*}{\makecell{$\Theta(1/\eps^2)$\\\footnotesize{\cite{FW26}}}} & \multirow{2}{*}{\makecell{$\Theta(1/\eps^2)$\\\footnotesize{\Cref{cor:samples-onePureState-fidelity}}}} & \makecell{\scriptsize{Upper}\\\scriptsize{Bound}} & \makecell{$\widetilde{O}(r^2/\eps^4)$\tnote{*}\\\footnotesize{\cite{UNWT25}}}\\
    \cmidrule{5-6}
    & & & & \makecell{\scriptsize{Lower}\\\scriptsize{Bound}} & \makecell{$\widetilde{\Omega}(r^2/\eps^2)$\\\footnotesize{\cite{Wang2026Nearly}}} \\
    \bottomrule
\end{tabular}
\begin{tablenotes}
    \footnotesize
    \item[*] The $\eps$-dependence improves further to $O(r^2/\varepsilon^2)$ when either the maximum rank of the states is $r$~\cite{LT26} or the known reference state has rank $r$~\cite{Wang26fidelity}. 
\end{tablenotes}
\end{threeparttable}
\end{table}

To complete our comparison, we also list the prior work for the general case and summarize these results in \Cref{table:fidelity-quantitative-bounds}, where the (minimum-)rank dependence is \emph{unavoidable} in both query and sample complexities. 
In particular, such quantum estimators have been provided recently in~\cite{WZC+23} with query complexity $\poly(r, 1/\eps)$, when the minimum rank of both states is $r$. This upper bound was later improved in a series of works~\cite{WGL+24,GP22,UNWT25}. Query complexity lower bounds for fidelity estimation were investigated in \cite{UNWT25,Wang26fidelity,Wang2026Nearly}. The current best query complexity upper bound is $\widetilde{O}(r/\eps^2)$ due to~\cite{UNWT25} and the query complexity lower bound is $\widetilde{\Omega}(r/\eps)$ due to~\cite{Wang2026Nearly}. In addition to quantum query complexity, the sample complexity for the general case was simultaneously investigated in~\cite{GP22,UNWT25,FW25,LT26,Wang26fidelity,LJ26,Wang2026Nearly}.
The current best sample complexity upper bound with minimum-rank dependence is $\widetilde{O}(r^2/\varepsilon^4)$ due to~\cite{UNWT25}, and the $\eps$-dependence can be further improved to $O(r^2/\varepsilon^2)$ either when the \emph{maximum} rank of both states is $r$~\cite{LT26} or when the known reference state is of rank $r$~\cite{Wang26fidelity}, while the current best sample complexity lower bound is $\widetilde{\Omega}(r^2/\varepsilon^2)$ due to~\cite{Wang2026Nearly}. 

\subsection{Proof technique: Algorithmic Uhlmann transform when one state is pure}

Uhlmann's theorem~\cite{Uhlmann76} (see also~\cite{Jozsa94} for an elementary proof) states that the fidelity $\F(\rho_0,\rho_1)$ equals the maximum overlap between purifications $\ket{\psi_0}^{\sfA\sfR}$ and $\ket{\psi_1}^{\sfA\sfR}$ of $\rho_0$ and $\rho_1$ obtainable by applying a unitary $U$ to the reference register $\sfR$:\footnote{Strictly speaking, $\sign^\SV(X_\Uhl)$ is generally a partial isometry, while any unitary extension attains the same optimal overlap. We omit this subtlety here for simplicity.}
\begin{equation}
    \label{eq:Uhlmann-theorem}
    \F(\rho_0,\rho_1) = \tr\rbra{\abs{\sqrt{\rho_0}\sqrt{\rho_1}}} = \max_U \abs*{ \bra{\psi_0} \rbra[\big]{I^\sfA\otimes U^\sfR} \ket{\psi_1} }.
\end{equation}

An explicit expression for an optimal unitary $U_\star$ that achieves the maximum in \Cref{eq:Uhlmann-theorem}, which we refer to below as the \emph{Uhlmann transform}, appeared implicitly in~\cite[Lemma 6]{Jozsa94}:\footnote{A self-contained proof derived from~\cite{Jozsa94} can be found in~\cite[Appendix F]{UNWT25}; see also Lemma 7.6 in the arXiv version of~\cite{MY23}.}
\begin{equation}
    \label{eq:Uhlmann-transform}
    U_\star = \sign^\SV \rbra*{ X_\Uhl }, \quad\text{where } X_\Uhl \coloneqq \tr_\sfA\rbra[\Big]{ \ketbra{\psi_0}{\psi_1}^{\sfA\sfR} }.
\end{equation}
Here, we refer to $X_\Uhl$ as the \emph{Uhlmann cross operator} and write the simply $X \coloneqq X_\Uhl$ when no confusion can arise, where the sign function is applied to the singular values of $X$. The algorithmic Uhlmann transform, namely the task of approximately implementing the Uhlmann transform $U_\star$, was first investigated in~\cite{MY23} (see also~\cite{BEM+23,BMY25}) using a tailored version of quantum singular value transformation (QSVT)~\cite{GSLW19}. A more efficient and direct approach to implementing $U_\star$, which involves an exact unitary dilation of the Uhlmann cross operator $X$, was later proposed in~\cite{UNWT25} and, when combined with the space-efficient QSVT~\cite{LGLW23}, led to complexity-theoretic consequences in~\cite{LLW25}.

\vspace{1em}
When one state is pure, the Uhlmann transform $U_\star$ admits a much simpler form: since a purification of a pure state is a product state across registers $\sfA$ and $\sfR$, the cross operator $X$ in \Cref{eq:Uhlmann-transform} has rank \emph{at most one}. More specifically, if $\rho_0=\rho$ and $\rho_1=\ketbra{\varphi}{\varphi}$ with a purification of the latter given by $\ket{\psi_1}^{\sfA\sfR} = \ket{\varphi}^\sfA\ket{\eta}^\sfR$, the Uhlmann cross operator can be expressed as
\[ X = \tr_\sfA\rbra*{\ket{\psi_0}\bra{\psi_1}} = \rbra[\big]{\bra{\varphi}\otimes I^\sfR}\ket{\psi_0}\bra{\eta} \coloneqq \ket{\iota_1}\bra{\eta}.\]
Interestingly, the Uhlmann transform $U_\star$ is now proportional to the Uhlmann cross operator:\footnote{Here we assume that $\F(\rho,\ketbra{\varphi}{\varphi})>0$ for simplicity. If $\F(\rho,\ketbra{\varphi}{\varphi})=0$, it holds that $X=0$ and thus every unitary is optimal.}
\begin{equation}
    \label{eq:Uhlmann-transform-onePureState}
    U_\star = \sign^\SV(X) = \frac{\ketbra{\iota_1}{\eta}}{\norm{\ket{\iota_1}}} = \frac{X}{\norm{\ket{\iota_1}}} = \frac{X}{\F(\rho,\ketbra{\varphi}{\varphi})}.
\end{equation}
Here, $\norm{\ket{v}}$ denotes the Euclidean norm of the vector $\ket{v}$. 
To verify that the normalization factor in \Cref{eq:Uhlmann-transform-onePureState} coincides with the Uhlmann fidelity, note that $\rho = \tr_\sfR\rbra*{\ketbra{\psi_0}{\psi_0}}$. A direct calculation then gives
\[\F(\rho,\ketbra{\varphi}{\varphi})^2 = \bra{\varphi} \rho \ket{\varphi} = \bra{\psi_0} \rbra[\big]{\ketbra{\varphi}{\varphi}^\sfA\otimes I^\sfR} \ket{\psi_0} = \norm*{ \ket{\iota_1} }^2.\] 
Therefore, fidelity estimation when one state is pure reduces to estimating the square-root amplitudes associated with the following unnormalized quantum states
\[ \ket{\widehat{\iota}_0} = \rbra[\big]{I^\sfA \otimes X^\dagger} \ket{\psi_0} \quad\text{and}\quad \ket{\widehat{\iota}_1} = \rbra[\big]{I^\sfA \otimes X} \ket{\psi_1}, \]
using the method of~\cite{Wang24}, and then taking the maximum of the two estimates. Here, $\ket{\iota_0}$ corresponds to the case where $\rho_0=\ketbra{\varphi}{\varphi}$ and $\rho_1=\rho$ by similar reasoning. Moreover, when $\rho_j$ is pure for $j\in\binset$, we have $\norm{\ket{\widehat{\iota}_j}^{\sfA\sfR}}=\norm{\ket{\iota_j}^\sfR}$.
Putting everything together, with $\norm{A}$ denoting the operator norm of the matrix $A$, we obtain the alternative expression
\[ \F(\rho_0,\rho_1) = \max\cbra*{ \norm{\ket{\iota_0}}, \norm{\ket{\iota_1}} } = \norm{X}. \]

Consequently, to complete the description of our quantum estimator, it suffices to implement the Uhlmann cross operator $X$ efficiently, as this allows us to efficiently implement the algorithmic Uhlmann transform \emph{without using QSVT} when one state is pure. To this end, the most natural approach is to consider an exact unitary dilation $W$ of $X$ such that $\bra{\bar{0}} W \ket{\bar{0}} = X$. As shown in~\cite[Section 5.1]{UNWT25},\footnote{See also~\cite[Lemma 4.8]{LLW25} for a self-contained statement.} one obtains the following explicit unitary:
\[ W = Q_1^\dagger \rbra[\big]{ I^{\sfA'} \otimes \textup{SWAP}^{\sfR',\sfS} } Q_0. \]
Here, the ancillary registers $\sfA'$ and $\sfR'$ are initialized to and projected back onto $\ket{\bar{0}}$, and the resulting quantum circuit implementation is illustrated by the tensor-network diagram in \Cref{figure:Uhlmann-cross-operator}.

\begin{figure}[!ht]
  \centering
  \begin{tikzpicture}[
      gate/.style={draw,rounded corners=1pt,minimum width=1.05cm,
        minimum height=1.45cm,fill=gray!8,line width=.8pt},
      swaptensor/.style={draw,rounded corners=1pt,minimum width=1.08cm,
        minimum height=1.45cm,fill=ForestGreen!8,line width=.8pt},
      statetensor/.style={draw,rounded corners=1pt,minimum width=1.45cm,
        minimum height=.78cm,line width=.8pt},
      wire/.style={line width=.8pt},
      signal/.style={-{Latex[length=1.8mm,width=1.4mm]},line width=.8pt},
      smalllabel/.style={font=\scriptsize}]

    \begin{scope}[xshift=-7.6cm]
    \node[smalllabel] at (10.85,1.72)
      {\textbf{(a)} Uhlmann cross operator $X$};
    \node[statetensor,fill=BurntOrange!10] (psibar) at (9.45,.25)
      {$\overline{\psi_1}$};
    \node[statetensor,fill=MidnightBlue!8] (psi) at (12.25,.25)
      {$\psi_0$};
    \draw[signal] (8.25,.25) -- (psibar.west)
      node[smalllabel,above,pos=.48] {$\sfS_{\rm in}$};
    \draw[wire] (psibar.east) -- (psi.west)
      node[smalllabel,above,midway] {$\sfA$};
    \draw[signal] (psi.east) -- (13.48,.25)
      node[smalllabel,above,pos=.72] {$\sfS_{\rm out}$};
    \node[smalllabel] at (10.85,-.75)
      {$X=\tr_\sfA\rbra*{\ketbra{\psi_0}{\psi_1}^{\sfA\sfR}}$};
    \end{scope}

    \node at (6.8,.25) {$=$};

    \begin{scope}[xshift=7.2cm]
    \node[smalllabel] at (3.75,1.72)
      {\textbf{(b)} Zero block of the unitary dilation $W$};
    \node[gate,fill=MidnightBlue!7] (qzero) at (1.85,.5) {$Q_0$};
    \node[swaptensor] (sw) at (3.72,-.5) {};
    \node[gate,fill=BurntOrange!8] (qone) at (5.60,.5) {$Q_1^\dagger$};

    \node[smalllabel,anchor=east] (zain) at (.95,1)
      {$\ket{\bar{0}}^{\sfA'}$};
    \node[smalllabel,anchor=east] (zrin) at (.95,0)
      {$\ket{\bar{0}}^{\sfR'}$};
    \node[smalllabel,anchor=west] (zaout) at (6.50,1)
      {$\bra*{\bar{0}}^{\sfA'}$};
    \node[smalllabel,anchor=west] (zrout) at (6.50,0)
      {$\bra*{\bar{0}}^{\sfR'}$};

    \draw[wire] (zain.east) -- ($(qzero.west)+(0,.5)$);
    \draw[wire] (zrin.east) -- ($(qzero.west)+(0,-.5)$);
    \draw[wire] ($(qzero.east)+(0,.5)$)
      -- ($(qone.west)+(0,.5)$);
    \draw[wire] ($(qzero.east)+(0,-.5)$)
      -- ($(sw.west)+(0,.5)$);
    \draw[wire] ($(sw.east)+(0,.5)$)
      -- ($(qone.west)+(0,-.5)$);
    \draw[wire] ($(qone.east)+(0,.5)$) -- (zaout.west);
    \draw[wire] ($(qone.east)+(0,-.5)$) -- (zrout.west);

    \node[smalllabel,anchor=east] (sin) at (.55,-1) {$\sfS_{\rm in}$};
    \node[smalllabel,anchor=west] (sout) at (6.88,-1) {$\sfS_{\rm out}$};
    \draw[signal] (sin.east) -- ($(sw.west)+(0,-.5)$);
    \draw[signal] ($(sw.east)+(0,-.5)$) -- (sout.west);

    \draw[wire] ($(sw.west)+(0,.5)$) -- ($(sw.east)+(0,-.5)$);
    \draw[preaction={draw=white,line width=2.4pt},wire]
      ($(sw.west)+(0,-.5)$) -- ($(sw.east)+(0,.5)$);
    \node[smalllabel,above=1pt of sw]
      {$\textup{SWAP}_{\sfR',\sfS}$};
    \end{scope}

  \end{tikzpicture}
  \caption{Uhlmann cross operator $X$ and the zero block of its unitary dilation $W$.}
  \label{figure:Uhlmann-cross-operator}
\end{figure}

\section{Preliminaries}
\label{sec:preliminaries}

We assume basic knowledge of quantum computation and quantum information theory. For an introduction, we refer the reader to the textbook~\cite{NC10}. 
Throughout the paper, we use the following notation: (1) $\widetilde{O}(f)$ denotes $O(f \operatorname{polylog}(f))$, while $\widetilde{\Omega}(f)$ denotes $\Omega(f/\operatorname{polylog}(f))$; (2) $\ket{\bar{0}}$ denotes $\ket{0}^{\otimes a}$ for an integer $a>1$; (3) we adopt sans-serif letters to denote quantum registers, such as $\sfA$ and $\sfR$. 

\subsection{Square-root amplitude estimation}

We use quantum amplitude estimation~\cite{BHMT02}, in particular the following version:

\begin{lemma}[Square-root amplitude estimation, adapted from~{\cite[Theorem~III.4]{Wang24}}]
\label{lem:srae}
Let $U$ be a unitary, $\Pi$ be a projector, and set $a=\norm*{\Pi U\ket*{\bar{0}}}$. For $0<\eps<1$, there exists an explicit algorithm that outputs $\widetilde a$ such that
\[
  \Pr[\abs*{\widetilde a-a} \leq \eps] \geq 0.99
\]
using $O(1/\eps)$ queries to the controlled versions of $U$ and $U^\dagger$.
\end{lemma}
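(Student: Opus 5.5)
The plan is to reduce square-root amplitude estimation to standard amplitude estimation, and to make the error depend only linearly on $1/\eps$ by working with the amplitude $a$ itself rather than its square $p=a^2=\norm{\Pi U\ket{\bar{0}}}^2$.

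\textbf{Step 1: phase estimation on the Grover iterate.} Write $U\ket{\bar 0}=\sin\theta\ket{g}+\cos\theta\ket{b}$ with $\theta\in[0,\pi/2]$ and $a=\sin\theta$. Here $\ket{g}\propto\Pi U\ket{\bar 0}$ and $\ket{b}\propto(I-\Pi)U\ket{\bar 0}$. Consider the Grover iterate
\[ G=U(2\ketbra{\bar 0}{\bar 0}-I)U^\dagger(I-2\Pi). \]
It acts as a rotation by $2\theta$ on $\mathrm{span}\{\ket g,\ket b\}$, so it has eigenphases $\pm 2\theta$. Each controlled application of $G$ uses one controlled $U$ and one controlled $U^\dagger$. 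I will also assume that the reflection $I-2\Pi$ is implementable, as is implicit in the lemma.

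Running quantum phase estimation with $M=\Theta(1/\eps)$ applications of controlled $G$ produces an estimate $\widetilde\theta$. This is essentially the analysis of~\cite{BHMT02}. With probability at least $8/\pi^2$ it satisfies $\abs{\widetilde\theta-\theta}\le \pi/M$, up to the sign ambiguity, which disappears once we take the sine.

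\textbf{Step 2: convert to an estimate of $a$.} Set $\widetilde a=\abs{\sin\widetilde\theta}$. Because $\sin$ is $1$-Lipschitz, $\abs{\widetilde a-a}\le\abs{\widetilde\theta-\theta}\le \pi/M$, and choosing $M=\lceil \pi/\eps\rceil$ gives error at most $\eps$.

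This is exactly where the square root pays off. The usual amplitude-estimation bound for $p=\sin^2\theta$ has the form $2\pi\sqrt{p(1-p)}/M+\pi^2/M^2$. Taking a square root of that estimate would lose accuracy near $a=0$, whereas working with $\theta$ directly sidesteps the issue.

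\textbf{Step 3: boost the success probability.} Repeat the procedure $O(1)$ times independently and output the median. By a Chernoff bound, the probability of success rises from $8/\pi^2$ to at least $0.99$, and the total query count remains $O(1/\eps)$.

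\textbf{Main obstacle.} The hardest step is the careful phase-estimation tail analysis, which must handle the two conjugate eigenvectors and the boundary cases $\theta\in\{0,\pi/2\}$. I would simply import the guarantee of~\cite{BHMT02} for the estimate of $\theta$. I would then argue that the two eigenphases $\pm2\theta$ yield estimates with the same value of $\abs{\sin}$, so that the superposition over the two eigenvectors causes no harm.
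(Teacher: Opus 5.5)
Your proof is correct. The paper gives no argument of its own for this lemma and simply imports it from \cite[Theorem~III.4]{Wang24}. Your derivation is a valid self-contained proof of it: BHMT phase estimation on the angle $\theta$; the fact that $\abs{\sin}$ is even, $\pi$-periodic and $1$-Lipschitz, which handles both the $\pm 2\theta$ eigenphases and the wrap-around modulo $\pi$; and median amplification.

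One aside in your Step~2 is slightly misleading. Your $\widetilde a=\abs{\sin\widetilde\theta}$ is exactly $\sqrt{\widetilde p}$ for BHMT's own estimate $\widetilde p=\sin^2\widetilde\theta$, so taking the square root loses nothing. Only the naive conversion of BHMT's error bound on $p$ is lossy near $a=0$, and even that is repaired by treating two cases separately: $a\le\pi/M$, where $\abs{\sqrt{\widetilde p}-a}\le\sqrt{\abs{\widetilde p-p}}\le\sqrt{3}\,\pi/M$, and $a>\pi/M$, where $\abs{\sqrt{\widetilde p}-a}\le\abs{\widetilde p-p}/a\le 3\pi/M$.
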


\subsection{Quantum sample-to-query lifting}

We use the fact that a $q$-query algorithm can be converted into a sample algorithm using $O(q^2)$ samples, which is known as the \emph{quantum sample-to-query lifting}~\cite{WZ25a,WZ25b}. Here, we adopt the following version:

\begin{lemma}[Adapted from~{\cite[Theorem~1.5]{TWZ25}}; see also~{\cite[Theorem~1.1]{CWZ25}}]
\label{lem:query-to-sample}
    Fix a constant number of unknown states. A unitary algorithm using $q$ queries to the state-preparation circuits for each state can, up to any fixed constant error in its output distribution, be simulated using $O(q^2)$ samples of each state.
\end{lemma}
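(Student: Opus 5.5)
The statement immediately above is \Cref{lem:query-to-sample}, the quantum sample-to-query lifting, which the paper adapts from~\cite[Theorem~1.5]{TWZ25}. Its proof is mostly a matter of checking that the cited theorem applies to the present setting, so my plan is to reduce to that theorem and then record the minor adaptations.

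\textbf{Setup.} Consider a unitary algorithm $\mathcal{A}$ that makes $q$ queries in total to the state-preparation circuits $O_j$ (and to $O_j^\dagger$ and their controlled versions) for a constant number $k$ of unknown states $\rho_j$. Each $O_j$ acts as $O_j\ket{\bar 0}^{\sfA\sfR}=\ket{\psi_j}^{\sfA\sfR}$, where $\tr_\sfR\ketbra{\psi_j}{\psi_j}=\rho_j$. The acceptance probability, or more generally the output distribution, of $\mathcal{A}$ is therefore a function of the unitaries $O_j$.

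\textbf{Step 1: single-state lifting.} The key observation, as in~\cite{WZ25a,WZ25b,TWZ25}, is that the output distribution depends on $O_j$ only through the $\rho_j$, up to the unitary freedom on $\sfR$ and the action of $O_j$ off $\ket{\bar 0}$. One can therefore fix the canonical purification and the canonical "reflection-type" oracle $R_j = I - 2\ketbra{\psi_j}{\psi_j}$, from which $O_j$-style queries are simulable in the amplitude-estimation framework.

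Each query to $R_j$ (or $e^{i\theta\ketbra{\psi_j}{\psi_j}}$) can be approximated from samples via density-matrix exponentiation / sample-based Hamiltonian simulation~\cite{LMR14}. The refined analysis of~\cite{TWZ25}, which uses a recursive or Grover-type technique rather than naive exponentiation, gives total sample cost $O(q^2)$ for constant error; by contrast, a naive analysis of density-matrix exponentiation would give $O(q^2/\delta)$. I invoke~\cite[Theorem~1.5]{TWZ25} directly for this step, since it is stated for a single unknown state with purified query access.

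\textbf{Step 2: many states.} With a constant number of states, I apply the single-state lifting to each oracle in turn by a hybrid argument. I replace the queries to $O_1$ by sample-based simulations with error $\delta/k$, then do the same for $O_2$, and so on. The total variation error accumulates additively to at most $\delta$, and each state uses $O(q^2)$ samples since $k$ and $\delta$ are constants. The simulations for different states act on disjoint sample registers, so they compose without interference. The same reasoning covers controlled queries and inverses, which~\cite{TWZ25} already handles; \cite[Theorem~1.1]{CWZ25} gives the same conclusion if needed.

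\textbf{Main obstacle.} The main obstacle is verifying that the query model used here (arbitrary purifications, possibly with a nontrivial reference register, plus controlled inverse access) matches the hypotheses of~\cite{TWZ25}. I would handle this by noting that any purified-access algorithm can be rewritten as one querying only the canonical oracle, because the output of such an algorithm must be invariant under the choice of purification when correctness is required for all of them.
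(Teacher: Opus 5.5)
Your load-bearing step is the same as the paper's. The paper proves \Cref{lem:query-to-sample} only by appeal to \cite[Theorem~1.5]{TWZ25} (see also \cite[Theorem~1.1]{CWZ25}), and you invoke the same theorem for the actual simulation, so the conclusion stands. Several supporting claims you add are wrong, however, and should not be relied on.

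\textbf{The invariance claim.} Your resolution of the ``main obstacle'' is a non sequitur. Requiring correctness for every valid state-preparation circuit does not make the output distribution invariant under the choice of circuit or purification. A query algorithm may apply $O_j$ to inputs other than $\ket{\bar 0}$, and its output distribution can change with $O_j$ while still meeting an $\eps$-accuracy guarantee. A sample-based algorithm's output depends only on the $\rho_j$, so the lemma should be read as follows: the sample algorithm reproduces the query algorithm instantiated with one particular (canonical or suitably randomized) valid circuit for each $\rho_j$. That is all the paper needs. The guarantee of \Cref{thm:query-onePureState-fidelity} holds for every valid circuit, hence for that one, and this is how \Cref{cor:samples-onePureState-fidelity} follows. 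Replace the invariance claim with this observation.

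\textbf{The Step 1 mechanism.} The sketch does not work as a mechanism and should be dropped. Density-matrix exponentiation of $\ketbra{\psi_j}{\psi_j}^{\sfA\sfR}$ consumes copies of the purification, whereas samples of a mixed $\rho_j$ only give its $\sfA$-marginal. Moreover, reflections $R_j$ only serve algorithms that use $O_j$ through $R_j$. They do not cover general circuits that apply $O_j$, $O_j^\dagger$, and their controlled versions to arbitrary inputs, such as $U_0,U_1$ in \Cref{eq:U-definitions}, where $Q_1^\dagger$ acts on the output of the SWAP. Your contrast with a naive $O(q^2/\delta)$ bound is also vacuous: for constant $\delta$ that bound is already $O(q^2)$. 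So the $\delta$-dependence is not what the cited theorem is needed for.

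\textbf{The Step 2 hybrid.} Once the queries to $O_1$ are replaced by a sample-consuming gadget, the remaining circuit is no longer a unitary query algorithm. To apply a single-state theorem to $O_2$, you must purify that gadget and treat the samples of $\rho_1$ as an oracle-independent input. Alternatively, use a multi-oracle form of the cited result, which is what the paper's adapted statement implicitly does.
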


\section{Quantum estimators via algorithmic Uhlmann transform}
\label{sec:algorithm}

\subsection{Algorithmic Uhlmann transform when one state is pure}

Let $Q_0$ and $Q_1$ be the state-preparation circuits for $\rho_0$ and $\rho_1$, respectively, satisfying
\[
    \forall j\in\binset, \qquad Q_j \ket{\bar{0}}^{\sfA\sfR}=\ket*{\psi_j}^{\sfA\sfR} \quad\text{and}\quad \rho_j=\tr_{\sfR}\rbra[\big]{\ketbra{\psi_j}{\psi_j}^{\sfA\sfR}}.
\]
Here, both $Q_0$ and $Q_1$ use a common reference register $\sfR$. Let $\sfS$ be a quantum register that contains the same number of qubits as $\sfR$. Define the unitary dilation $W$ of the Uhlmann cross operator $X$ as
\begin{equation} 
    \label{eq:W-definition}
    W^{\sfA\sfR\sfS} \coloneqq \rbra[\big]{Q_{1}^\dagger}^{\sfA\sfR} \rbra[\big]{I^{\sfA}\otimes \textup{SWAP}^{\sfR,\sfS}} Q_{0}^{\sfA\sfR}.
\end{equation}
In \cite[Section~5.1]{UNWT25} (see also \cite[Lemma 4.8]{LLW25}), it is shown that 
\begin{equation} \label{eq:def-X}
  X \coloneqq (\bra*{\bar{0}}^{\sfA\sfR}\otimes I^{\sfS})W^{\sfA\sfR\sfS}(\ket{\bar{0}}^{\sfA\sfR}\otimes I^{\sfS})
    =\tr_{\sfA}(\ketbra{\psi_0}{\psi_1}^{\sfA\sfR}).
\end{equation}

\begin{lemma}[Uhlmann transform when one state is pure]
\label{lemma:rank-one-Uhlmann-transform}
Let $X=\tr_{\sfA}(\ketbra{\psi_0}{\psi_1}^{\sfA\sfR})$ be the Uhlmann cross operator. Then the Uhlmann transform $U_\star = \sign^\SV(X)$ admits the following simplified form when one state is pure:
\begin{enumerate}[label={\upshape(\arabic*)}]
  \item If $\rho_1=\ketbra{\varphi_1}{\varphi_1}$ is pure, write
  $\ket{\psi_1}=\ket{\varphi_1}^{\sfA}\ket{\eta_1}^{\sfR}$ and let $\ket*{\iota_1}^{\sfR} \coloneqq (\bra*{\varphi_1}^{\sfA}\otimes I^{\sfR})\ket*{\psi_0}^{\sfA\sfR}$.
  Then
  \begin{equation}
    \label{eq:X-rho1-pure}
    X=\ketbra{\iota_1}{\eta_1} \quad\text{and}\quad \norm*{\ket{\iota_1}}=\F(\rho_0,\rho_1).
  \end{equation}
  If $\F(\rho_0,\rho_1) > 0$, then $\ket{\eta_1}$ is the right singular
  vector, $\ket{\iota_1}/\F(\rho_0,\rho_1)$ is the left singular vector, and the
  corresponding singular value is $\F(\rho_0,\rho_1)$.

  \item If $\rho_0=\ketbra{\varphi_0}{\varphi_0}$ is pure, write
  $\ket{\psi_0}=\ket{\varphi_0}^{\sfA}\ket{\eta_0}^{\sfR}$ and let $\ket*{\iota_0}^{\sfR} \coloneqq (\bra*{\varphi_0}^{\sfA}\otimes I^{\sfR})\ket*{\psi_1}^{\sfA\sfR}$.
  Then
  \begin{equation}
    \label{eq:X-rho0-pure}
    X=\ketbra{\eta_0}{\iota_0} \quad\text{and}\quad \norm*{\ket{\iota_0}}=\F(\rho_0,\rho_1).
  \end{equation}
  If $\F(\rho_0,\rho_1) > 0$, then $\ket{\eta_0}$ is the left singular
  vector, $\ket{\iota_0}/\F(\rho_0,\rho_1)$ is the right singular vector, and the
  corresponding singular value is $\F(\rho_0,\rho_1)$.
\end{enumerate}
In either case, we have $\rank(X)\leq 1$ and $\norm*{X}=\F(\rho_0,\rho_1)$.
If $\F(\rho_0,\rho_1) = 0$, then $X=0$.
\end{lemma}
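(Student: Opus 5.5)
The plan is a direct computation from the definition $X=\tr_\sfA(\ketbra{\psi_0}{\psi_1}^{\sfA\sfR})$, treating the two cases symmetrically.

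\emph{Case (1).} Write $\ket{\psi_1}=\ket{\varphi_1}^\sfA\ket{\eta_1}^\sfR$ and expand the partial trace in an orthonormal basis $\{\ket{a}\}$ of $\sfA$:
\[
X=\sum_a (\bra{a}\otimes I)\ket{\psi_0}\bra{\psi_1}(\ket{a}\otimes I).
\]
Since $\bra{\psi_1}(\ket{a}\otimes I)=\braket{\varphi_1}{a}\bra{\eta_1}$, resumming over $a$ via $\sum_a \ket{a}\braket{a}{\varphi_1}=\ket{\varphi_1}$ gives $X=(\bra{\varphi_1}\otimes I)\ket{\psi_0}\bra{\eta_1}=\ketbra{\iota_1}{\eta_1}$.

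For the norm, use $\rho_0=\tr_\sfR\ketbra{\psi_0}{\psi_0}$ to get
\[
\norm{\ket{\iota_1}}^2=\bra{\psi_0}(\ketbra{\varphi_1}{\varphi_1}\otimes I)\ket{\psi_0}=\bra{\varphi_1}\rho_0\ket{\varphi_1}.
\]
It remains to identify this quantity with $\F(\rho_0,\rho_1)^2$. Since $\sqrt{\rho_1}=\ketbra{\varphi_1}{\varphi_1}$, we have $\sqrt{\rho_1}\rho_0\sqrt{\rho_1}=\bra{\varphi_1}\rho_0\ket{\varphi_1}\,\ketbra{\varphi_1}{\varphi_1}$. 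Its square root has trace $\sqrt{\bra{\varphi_1}\rho_0\ket{\varphi_1}}$. Because $\tr\abs{\sqrt{\rho_0}\sqrt{\rho_1}}=\tr\abs{\sqrt{\rho_1}\sqrt{\rho_0}}$, the fidelity formula from the introduction is symmetric, and this value equals $\F(\rho_0,\rho_1)$.

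\emph{Singular structure in case (1).} The vector $\ket{\eta_1}$ is a unit vector, since $\ket{\psi_1}$ is normalized and $\ket{\varphi_1}$ can be taken as a unit vector. Then $X=\F\cdot(\ket{\iota_1}/\F)\bra{\eta_1}$ is, when $\F>0$, a rank-one SVD with unit left and right vectors and singular value $\F$. Hence $\rank X\le1$ and $\norm{X}=\norm{\ket{\iota_1}}\,\norm{\ket{\eta_1}}=\F$. If $\F=0$, then $\ket{\iota_1}=0$, so $X=0$.

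\emph{Case (2).} Run the same expansion with $\ket{\psi_0}=\ket{\varphi_0}\ket{\eta_0}$. Now $(\bra{a}\otimes I)\ket{\psi_0}=\braket{a}{\varphi_0}\ket{\eta_0}$, and resumming gives
\[
X=\ket{\eta_0}\,\bra{\psi_1}(\ket{\varphi_0}\otimes I)=\ketbra{\eta_0}{\iota_0},
\]
because $\bra{\psi_1}(\ket{\varphi_0}\otimes I)$ is the adjoint of $(\bra{\varphi_0}\otimes I)\ket{\psi_1}$. The norm identity follows as before, with $\norm{\ket{\iota_0}}^2=\bra{\varphi_0}\rho_1\ket{\varphi_0}=\F^2$. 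Alternatively, note that swapping the roles of the two states replaces $X$ by $X^\dagger$ and apply case (1).

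The only step needing care is the bookkeeping of the partial trace of an off-diagonal operator: which factor is a bra and which is a ket, and where the complex conjugation lands. There is no genuine obstacle. A minor point is that the product form $\ket{\varphi}\ket{\eta}$ of a purification of a pure state must be justified. This follows from the Schmidt decomposition: a pure marginal forces Schmidt rank one, so the purification is a product state.
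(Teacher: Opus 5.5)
Your proposal is correct and follows the paper's proof essentially step for step. You expand the partial trace in a basis of $\sfA$ to obtain the rank-one form, then compute $\norm{\ket{\iota_j}}^2=\bra{\varphi_j}\rho_{1-j}\ket{\varphi_j}=\F(\rho_0,\rho_1)^2$. The details you add, which the paper takes for granted, are all sound: deriving the last equality from the trace formula, the unit norm of $\ket{\eta_j}$, the Schmidt-rank justification of the product form, and the $X\mapsto X^\dagger$ symmetry between the two cases.
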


\begin{proof}
We first prove the case where $\rho_1 = \ketbra{\varphi_1}{\varphi_1}$ is pure. Let $\cbra{\ket{a}}_a$ be an orthonormal basis of the quantum register $\sfA$. Then a direct calculation shows that
\begin{align*}    
  X &=\sum_a (\bra*{a}^{\sfA}\otimes I^{\sfR}) \ketbra{\psi_0}{\psi_1}^{\sfA\sfR}
     (\ket*{a}^{\sfA}\otimes I^{\sfR})\\
  &=\sum_a \braket{\varphi_1}{a} \rbra[\big]{ \rbra[\big]{\bra*{a}^{\sfA}\otimes I^{\sfR}} \ket*{\psi_0}^{\sfA\sfR}} \bra*{\eta_1}^{\sfR}\\
  &=\rbra*{ \rbra[\big]{\bra*{\varphi_1}^{\sfA}\otimes I^{\sfR}} \ket*{\psi_0}^{\sfA\sfR} }\bra*{\eta_1}^{\sfR} \\
  &=\ketbra{\iota_1}{\eta_1}.
\end{align*}
Furthermore, another direct calculation gives
\[ 
    \norm*{\ket{\iota_1}}^2 
    =\bra*{\psi_0} (\ketbra{\varphi_1}{\varphi_1}^{\sfA}\otimes I^{\sfR}) \ket*{\psi_0}
    =\bra*{\varphi_1}\rho_0\ket*{\varphi_1}
    =\F(\rho_0,\rho_1)^2. 
\]

We next establish the case where $\rho_0=\ketbra{\varphi_0}{\varphi_0}$ is pure. An analogous calculation with the roles exchanged gives the following equality:
\[ X =\sum_a \braket{a}{\varphi_0} \ket{\eta_0}^{\sfR}\bra{\psi_1}(\ket*{a}^{\sfA} \otimes I^{\sfR}) =\ketbra{\eta_0}{\iota_0}, \]
where the unnormalized quantum state $\ket{\iota_0}$ satisfies that $\norm*{\ket{\iota_0}}^2 =\bra*{\varphi_0}\rho_1\ket*{\varphi_0} =\F(\rho_0,\rho_1)^2$.

In both cases, it is evident that $\rank(X)\leq 1$ and, when $\F(\rho_0,\rho_1)>0$, the unique non-zero singular value of $X$
(i.e., its operator norm $\norm{X}$) equals $\F(\rho_0,\rho_1)$.
Moreover, if $\F(\rho_0,\rho_1)=0$, then $\ket{\iota_1}=0$ in the first case and $\ket{\iota_0}=0$ in the second case, implying that $X=0$.
\end{proof}

\subsection{Query-optimal quantum estimator}

We begin by defining the following quantum circuits $U_1$ and $U_0$, as illustrated in \Cref{figure:U1-circuit,figure:U0-circuit}, respectively: 
\begin{equation}
    \label{eq:U-definitions}
    U_1 \coloneqq
    \rbra[\big]{ W^{\sfA'\sfR'\sfS}\otimes I^{\sfA} }
    \rbra[\big]{ Q_1^{\sfA\sfS}\otimes I^{\sfA'\sfR'} }
    \quad\text{and}\quad
    U_0 \coloneqq
    \rbra[\big]{ (W^{\sfA'\sfR'\sfS})^\dagger\otimes I^{\sfA} }
    \rbra[\big]{ Q_0^{\sfA\sfS}\otimes I^{\sfA'\sfR'} }.
\end{equation}

\begin{figure}[!ht]
\centering
\begin{minipage}[t]{0.49\textwidth}
\centering
\begin{tikzpicture}[
    wire/.style={line width=.75pt},
    prep/.style={draw,rounded corners=1pt,fill=MidnightBlue!7,
      minimum width=1.25cm,minimum height=1.02cm,line width=.75pt},
    oracle/.style={draw,rounded corners=1pt,fill=BurntOrange!8,
      minimum width=1.32cm,minimum height=1.02cm,line width=.75pt},
    block/.style={draw=gray!70,dashed,rounded corners=2pt,line width=.6pt},
    register/.style={font=\scriptsize},
    blocklabel/.style={font=\scriptsize,fill=white,inner sep=1.5pt}]
    \foreach \y in {0,-.62,-1.24,-1.86}
      \draw[wire] (.95,\y) -- (6.85,\y);
    \node[register,anchor=east] at (.82,0) {$\ket{\bar{0}}$};
    \node[register,anchor=east] at (.82,-.62) {$\ket{\bar{0}}$};
    \node[register,anchor=east] at (.82,-1.24) {$\ket{\bar{0}}$};
    \node[register,anchor=east] at (.82,-1.86) {$\ket{\bar{0}}$};
    \node[prep] at (1.55,-.31) {$Q_1$};
    \node[oracle] at (2.95,-1.55) {$Q_0$};
    \draw[wire] (4.30,-.62) -- (4.30,-1.24);
    \draw[wire] (4.20,-.72) -- (4.40,-.52);
    \draw[wire] (4.20,-.52) -- (4.40,-.72);
    \draw[wire] (4.20,-1.34) -- (4.40,-1.14);
    \draw[wire] (4.20,-1.14) -- (4.40,-1.34);
    \node[register,above=1pt] at (4.30,-.52)
      {$\operatorname{SWAP}$};
    \node[oracle,minimum width=1.52cm] at (5.78,-1.55)
      {$Q_1^\dagger$};
    \draw[block] (2.22,-.36) rectangle (6.65,-2.18);
    \node[blocklabel] at (4.44,-2.18) {$W$};
    \node[register,anchor=west] at (6.98,0) {$\sfA$};
    \node[register,anchor=west] at (6.98,-.62) {$\sfS$};
    \node[register,anchor=west] at (6.98,-1.24) {$\sfR'$};
    \node[register,anchor=west] at (6.98,-1.86) {$\sfA'$};
\end{tikzpicture}
\caption{Quantum circuit $U_1$.}
\label{figure:U1-circuit}
\end{minipage}\hfill
\begin{minipage}[t]{0.49\textwidth}
\centering
\begin{tikzpicture}[
    wire/.style={line width=.75pt},
    prep/.style={draw,rounded corners=1pt,fill=MidnightBlue!7,
      minimum width=1.25cm,minimum height=1.02cm,line width=.75pt},
    oracle/.style={draw,rounded corners=1pt,fill=BurntOrange!8,
      minimum width=1.32cm,minimum height=1.02cm,line width=.75pt},
    block/.style={draw=gray!70,dashed,rounded corners=2pt,line width=.6pt},
    register/.style={font=\scriptsize},
    blocklabel/.style={font=\scriptsize,fill=white,inner sep=1.5pt}]
    \foreach \y in {0,-.62,-1.24,-1.86}
      \draw[wire] (.95,\y) -- (6.85,\y);
    \node[register,anchor=east] at (.82,0) {$\ket{\bar{0}}$};
    \node[register,anchor=east] at (.82,-.62) {$\ket{\bar{0}}$};
    \node[register,anchor=east] at (.82,-1.24) {$\ket{\bar{0}}$};
    \node[register,anchor=east] at (.82,-1.86) {$\ket{\bar{0}}$};
    \node[prep] at (1.55,-.31) {$Q_0$};
    \node[oracle] at (2.95,-1.55) {$Q_1$};
    \draw[wire] (4.30,-.62) -- (4.30,-1.24);
    \draw[wire] (4.20,-.72) -- (4.40,-.52);
    \draw[wire] (4.20,-.52) -- (4.40,-.72);
    \draw[wire] (4.20,-1.34) -- (4.40,-1.14);
    \draw[wire] (4.20,-1.14) -- (4.40,-1.34);
    \node[register,above=1pt] at (4.30,-.52)
      {$\operatorname{SWAP}$};
    \node[oracle,minimum width=1.52cm] at (5.78,-1.55)
      {$Q_0^\dagger$};
    \draw[block] (2.22,-.36) rectangle (6.65,-2.18);
    \node[blocklabel] at (4.44,-2.18) {$W^\dagger$};
    \node[register,anchor=west] at (6.98,0) {$\sfA$};
    \node[register,anchor=west] at (6.98,-.62) {$\sfS$};
    \node[register,anchor=west] at (6.98,-1.24) {$\sfR'$};
    \node[register,anchor=west] at (6.98,-1.86) {$\sfA'$};
\end{tikzpicture}
\caption{Quantum circuit $U_0$.}
\label{figure:U0-circuit}
\end{minipage}
\end{figure}

\begin{proposition}[Alternative expression for the Uhlmann fidelity when one state is pure]
\label{prop:no-flag-identity}
Consider the projector $\Pi=I^{\sfA\sfS}\otimes\ketbra{\bar{0}}{\bar{0}}^{\sfA'\sfR'}$. Then one can define the quantities $a_0$ and $a_1$ by
\begin{subequations}
\label{eq:onePureState-fidelity-estimates}
\begin{align}
  a_1 & \coloneqq \norm*{\Pi U_1\ket{\bar{0}}} =\norm*{(I^{\sfA}\otimes X^{\sfS})\ket{\psi_1}^{\sfA\sfS}},\\
  a_0 & \coloneqq \norm*{\Pi U_0\ket{\bar{0}}} =\norm*{(I^{\sfA}\otimes (X^{\sfS})^\dagger)\ket{\psi_0}^{\sfA\sfS}}.
\end{align}
\end{subequations}
When (at least) one of the states $\rho_0$ and $\rho_1$ is pure, we have $\F(\rho_0,\rho_1)=\max\{a_0,a_1\}$. 
\end{proposition}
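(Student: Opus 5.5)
The plan has two parts. First I will verify the circuit identities in \eqref{eq:onePureState-fidelity-estimates} from the block-encoding property \eqref{eq:def-X}. Then I will evaluate $a_0$ and $a_1$ by a case analysis based on \Cref{lemma:rank-one-Uhlmann-transform}, combined with a general upper bound by $\norm{X}$.

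\textbf{Circuit identities.} For $U_1$, the first layer $Q_1^{\sfA\sfS}$ prepares $\ket{\psi_1}^{\sfA\sfS}$, with $\sfS$ playing the role of the reference register, and leaves $\sfA'\sfR'$ in $\ket{\bar{0}}$. The operator $\Pi$ projects $\sfA'\sfR'$ onto $\ket{\bar{0}}$ and acts as the identity on $\sfA\sfS$. Hence
\[ \norm*{\Pi U_1\ket{\bar 0}} = \norm*{\rbra*{(\bra{\bar 0}^{\sfA'\sfR'}\otimes I^{\sfS}) W (\ket{\bar 0}^{\sfA'\sfR'}\otimes I^{\sfS})\otimes I^{\sfA}}\ket{\psi_1}^{\sfA\sfS}}, \]
and by \eqref{eq:def-X}, with $\sfA'\sfR'$ relabeling $\sfA\sfR$, the middle operator is exactly $X^{\sfS}$. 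The computation for $U_0$ is the same, using $W^\dagger$ and the fact that $(\bra{\bar0}\otimes I)W^\dagger(\ket{\bar0}\otimes I)=X^\dagger$. I expect this register bookkeeping, specifically checking that the relabeled registers match the convention in \eqref{eq:W-definition}, to be the only step needing real care.

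\textbf{General upper bound.} Since $\ket{\psi_j}$ is a unit vector, the operator-norm inequality gives
\[ a_1\le\norm{I\otimes X}=\norm{X} \quad\text{and}\quad a_0\le \norm{X^\dagger}=\norm{X}. \]
By \Cref{lemma:rank-one-Uhlmann-transform}, $\norm{X}=\F(\rho_0,\rho_1)$ whenever one state is pure. Therefore $\max\{a_0,a_1\}\le \F(\rho_0,\rho_1)$, whichever state is pure.

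\textbf{Matching lower bound.} It remains to show that one of $a_0,a_1$ attains $\F(\rho_0,\rho_1)$.
\begin{itemize}
\item If $\rho_1$ is pure, write $\ket{\psi_1}=\ket{\varphi_1}\ket{\eta_1}$ and $X=\ketbra{\iota_1}{\eta_1}$ as in \eqref{eq:X-rho1-pure}. Then $(I\otimes X)\ket{\psi_1}=\ket{\varphi_1}\ket{\iota_1}$, so $a_1=\norm{\ket{\iota_1}}=\F(\rho_0,\rho_1)$.
\item If $\rho_0$ is pure, write $\ket{\psi_0}=\ket{\varphi_0}\ket{\eta_0}$ and $X^\dagger=\ketbra{\iota_0}{\eta_0}$ as in \eqref{eq:X-rho0-pure}. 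Then $a_0=\norm{\ket{\iota_0}}=\F(\rho_0,\rho_1)$.
\end{itemize}
The degenerate case $\F=0$ is consistent with this, since then $X=0$ and both $a_j$ vanish. Combining the upper and lower bounds gives $\F(\rho_0,\rho_1)=\max\{a_0,a_1\}$.
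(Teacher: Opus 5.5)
Your proposal is correct and follows essentially the same route as the paper. The identities for $a_0$ and $a_1$ come from $W$ (respectively $W^\dagger$) block-encoding $X$ (respectively $X^\dagger$). The upper bound $\max\{a_0,a_1\}\le\norm{X}=\F(\rho_0,\rho_1)$ comes from the operator norm together with the rank-one lemma. Tightness comes from the same case analysis on which state is pure, with the case $\F(\rho_0,\rho_1)=0$ handled through $X=0$.
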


\begin{proof}
The equalities in \Cref{eq:onePureState-fidelity-estimates} follow immediately from \Cref{eq:U-definitions,eq:def-X}. Using \Cref{lemma:rank-one-Uhlmann-transform}, we obtain the following inequalities:
\begin{subequations}
\label{eq:onePureState-fidelity-estimate-bounds}
\begin{align}
  a_1 &=\norm*{(I^{\sfA}\otimes X^{\sfS})\ket{\psi_1}^{\sfA\sfS}}
    \leq\norm*{X^{\sfS}}
    =\F(\rho_0,\rho_1),\\
  a_0 &=\norm*{(I^{\sfA}\otimes (X^{\sfS})^\dagger)\ket{\psi_0}^{\sfA\sfS}}
    \leq\norm*{(X^{\sfS})^\dagger}
    =\F(\rho_0,\rho_1).
\end{align}
\end{subequations}
It remains to show that at least one of these inequalities is tight:
\begin{itemize}
    \item If $\rho_1 = \ketbra{\varphi_1}{\varphi_1}$ is pure, then $\ket{\psi_1}^{\sfA\sfS}=\ket{\varphi_1}^{\sfA}\ket{\eta_1}^{\sfS}$, and \Cref{eq:X-rho1-pure} gives
    \[  a_1 =\norm[\big]{ (I^{\sfA}\otimes X^{\sfS}) (\ket{\varphi_1}^{\sfA}\ket{\eta_1}^{\sfS})}
        =\norm[\big]{ \ket{\varphi_1}^{\sfA}\otimes \ket{\iota_1}^{\sfS}}
        =\norm[\big]{\ket{\iota_1}^{\sfS}} =\F(\rho_0,\rho_1).
    \]
    \item If $\rho_0 = \ketbra{\varphi_0}{\varphi_0}$ is pure, then $\ket{\psi_0}=\ket{\varphi_0}\ket{\eta_0}$, and \Cref{eq:X-rho0-pure} similarly gives
    \[ a_0 =\norm*{ (I^{\sfA}\otimes (X^{\sfS})^\dagger) (\ket{\varphi_0}^{\sfA}\ket{\eta_0}^{\sfS})}
        =\norm*{ \ket{\varphi_0}^{\sfA}\otimes \ket{\iota_0}^{\sfS}}
        = \norm*{\ket{\iota_0}^{\sfS}} = \F(\rho_0,\rho_1).
    \]
\end{itemize}

Together with \Cref{eq:onePureState-fidelity-estimate-bounds}, we establish the alternative expression $\F(\rho_0,\rho_1)=\max\{a_0,a_1\}$. Note also that the case where $\F(\rho_0,\rho_1) = 0$ is included: in particular, \Cref{lemma:rank-one-Uhlmann-transform} gives $X=0$, so both $a_0$ and $a_1$ are zero.
\end{proof}

We are now ready to present the query-optimal quantum estimator, as given in \Cref{alg:query}:

\begin{algorithm}[!ht]
  \caption{Fidelity estimation when one state is pure.}
  \label{alg:query}
  \begin{algorithmic}[1]
    \Require State-preparation circuits $Q_0$ and $Q_1$ for $\rho_0$ and $\rho_1$, respectively; error parameter $\eps$.
    \Ensure An estimate $\widetilde F$ of $\F(\rho_0,\rho_1)$.
    \State Construct the quantum circuit $W$ from \Cref{eq:W-definition}, as well as the quantum circuits $U_0$ and $U_1$ from \Cref{eq:U-definitions}.
    \For{$j\in\binset$}
      \State Use \SRAE$(U_j,\Pi,\eps)$ to obtain
      $\widetilde a_j$.
    \EndFor
    \State \Return $\widetilde F=\max\{\widetilde a_0,\widetilde a_1\}$.
  \end{algorithmic}
\end{algorithm}

\begin{theorem}[Query-optimal quantum estimator for one-pure-state fidelity estimation]
\label{thm:query-onePureState-fidelity}
Let $\eps(n)$ be an efficiently computable function such that $0 < \eps < 1$. Given purified quantum query access to two unknown $n$-qubit quantum states
$\rho_0$ and $\rho_1$, under the promise that at least one of them is pure, the quantum estimator underlying \Cref{alg:query}, without prior knowledge of which state is pure, outputs an estimate $\widetilde F$ such that
\begin{equation}
    \label{eq:query-guarantee}
    \Pr\sbra*{ \abs*{\widetilde F-\F(\rho_0,\rho_1)} \leq \eps } \geq 0.9,
\end{equation}
using $O(1/\eps)$ queries to $Q_0$ and $Q_1$.
Furthermore, every quantum estimator for this task requires $\Omega(1/\eps)$ queries in the worst case when $0<\eps < 1/4$. 
\end{theorem}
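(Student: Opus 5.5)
For the upper bound I will combine \Cref{prop:no-flag-identity} with \Cref{lem:srae}. First, note that $U_1$ and $U_0$ in \Cref{eq:U-definitions} each use a constant number of queries: $U_1$ uses one $Q_1$ followed by $W$, which contains one $Q_0$ and one $Q_1^\dagger$. $U_0$ is symmetric. Hence each call to controlled $U_j$ or $U_j^\dagger$ costs $O(1)$ queries to controlled versions of $Q_0,Q_1$ and their inverses. Running \SRAE$(U_j,\Pi,\eps)$ from \Cref{lem:srae} therefore uses $O(1/\eps)$ queries and returns $\widetilde a_j$ with $\abs{\widetilde a_j-a_j}\le\eps$ with probability at least $0.99$. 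By a union bound, both estimates are $\eps$-accurate with probability at least $0.98\ge 0.9$.

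On that event, I use the elementary fact that $\abs{\max\{x_0,x_1\}-\max\{y_0,y_1\}}\le\max_j\abs{x_j-y_j}$. This gives $\abs{\widetilde F-\max\{a_0,a_1\}}\le\eps$. \Cref{prop:no-flag-identity} says $\max\{a_0,a_1\}=\F(\rho_0,\rho_1)$ under the promise, so \Cref{eq:query-guarantee} follows. The algorithm never uses knowledge of which state is pure, since it always runs both estimations.

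For the lower bound, I will reduce from a known $\Omega(1/\eps)$ bound by restricting to instances where the promise holds trivially. The cleanest choice is pure-state fidelity estimation: both states are pure, so the promise holds. Take $\rho_1=\ketbra{0}{0}$ to be fixed, with a trivial preparation circuit. Take $\rho_0$ to be the single-qubit state $\cos\theta\ket0+\sin\theta\ket1$ prepared by an unknown rotation, with a unitary oracle of the form used in amplitude-estimation lower bounds. Then $\F=\abs{\cos\theta}$, and an $\eps$-estimator distinguishes $\theta$ values whose cosines differ by more than $2\eps$. The optimality of amplitude estimation (the lower bound of \cite{BHMT02}, or equivalently the $\Omega(1/\eps)$ bound cited from \cite{Wang24,FW25}) then gives $\Omega(1/\eps)$ queries.

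The main obstacle is making this reduction respect the query model: purified access with inverses and controlled versions. One must also stay in the regime $\eps<1/4$ where the two hypotheses remain valid states with distinct fidelities. I would handle this by simply invoking the lower bound of \cite{Wang24} for pure-state fidelity estimation, which is already stated in the same access model. Every instance there satisfies our promise, so any estimator for our task solves it, and the $\Omega(1/\eps)$ bound transfers directly.
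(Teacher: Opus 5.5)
Your proposal is correct and follows the paper's proof essentially step for step. The shared steps are: \SRAE{} applied to $U_0$ and $U_1$ (each costing $O(1)$ queries), a union bound giving success probability $0.98$, the inequality $\abs{\max_j x_j-\max_j y_j}\le\max_j\abs{x_j-y_j}$ combined with \Cref{prop:no-flag-identity}, and the lower bound inherited from the pure-state fidelity lower bound of \cite{Wang24} (Theorem~V.4 there), which transfers because every pure-state instance satisfies the one-pure-state promise.
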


\begin{proof}
Using square-root amplitude estimation (\Cref{lem:srae}), it follows that
\[ \forall j\in\binset, \quad \Pr[\abs*{\widetilde a_j-a_j} \leq \eps] \geq 0.99. \]

By the union bound, we obtain
\begin{align*}
    \Pr\sbra*{ \rbra*{\abs*{\widetilde a_0-a_0} \leq \eps} \wedge \rbra*{\abs*{\widetilde a_1-a_1} \leq \eps} } &\geq 1 - \Pr[\abs*{\widetilde a_0-a_0} > \eps] - \Pr[\abs*{\widetilde a_1-a_1} > \eps]\\
    &\geq 1- 0.01-0.01 = 0.98.
\end{align*}
On this event, by the alternative expression for the Uhlmann fidelity (\Cref{prop:no-flag-identity}), we have
\[ 
    \abs*{\widetilde F-\F(\rho_0,\rho_1)} 
    = \abs*{ \max_{j\in\binset}\widetilde a_j -\max_{j\in\binset}a_j}
    \leq \max_{j\in\binset} \abs*{\widetilde a_j-a_j}
    \leq \eps.
\]
Here, the first inequality uses the elementary inequality $\abs{\max_j x_j-\max_j y_j} \leq \max_j \abs{x_j-y_j}$. We thus establish the bound in \Cref{eq:query-guarantee}.

For the query complexity, it is clear that $U_0$ and $U_1$ use $O(1)$ queries to $Q_0$ and $Q_1$. 
By \Cref{lem:srae}, \Cref{alg:query} uses $O(1/\eps)$ queries to $Q_0$ and $Q_1$ in total, while the matching lower bound follows immediately from~\cite[Theorem V.4]{Wang24}.
\end{proof}

\subsection{Sample-optimal quantum estimator}

Applying \Cref{lem:query-to-sample} with $q=O(1/\eps)$ to \Cref{thm:query-onePureState-fidelity}, we obtain a sample-optimal estimator, while the matching lower bound follows immediately from~\cite[Theorem B.4]{Wang24}. 

\begin{corollary}[Sample-optimal quantum estimator for one-pure-state fidelity estimation]
\label{cor:samples-onePureState-fidelity}
Let $\eps(n)$ be an efficiently computable function such that $0 < \eps < 1$. Given sample access to two unknown $n$-qubit quantum states $\rho_0$ and $\rho_1$, under the promise that at least one of them is pure, there exists an explicit quantum estimator that, without prior knowledge of which state is pure, estimates $\F(\rho_0, \rho_1)$ to within additive error $\eps$ with constant success probability, using $O(1/\eps^2)$ samples of each state.
Furthermore, when $0<\eps<1/4$, every quantum estimator for this task requires
$\Omega(1/\eps^2)$ samples of each state in the worst case.
\end{corollary}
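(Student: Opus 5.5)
The corollary has an upper bound and a lower bound, and I will prove them separately.

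\textbf{Upper bound.} I will combine \Cref{thm:query-onePureState-fidelity} with the sample-to-query lifting of \Cref{lem:query-to-sample}. First I check that the estimator behind \Cref{alg:query} has the form that lemma requires. It is a unitary algorithm that queries the state-preparation circuits $Q_0$ and $Q_1$ (and their inverses and controlled versions) $q=O(1/\eps)$ times in total, and then measures. Every query goes through $U_0$, $U_1$ and $W$, which use $Q_j$ only as black boxes together with the fixed SWAP. The two calls to \SRAE{} can be run one after the other as a single unitary algorithm followed by a measurement. Only a constant number of unknown states (two) is involved.

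\Cref{lem:query-to-sample} then simulates this algorithm with $O(q^2)=O(1/\eps^2)$ samples of each state. The output distribution changes by at most a fixed constant of our choosing, say $0.05$ in total variation. Since the query algorithm succeeds with probability at least $0.9$ by \eqref{eq:query-guarantee}, the sample-based estimator outputs an $\eps$-accurate value with probability at least $0.85$. This is a constant, and it can be amplified by taking the median of independent repetitions.

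The only subtlety is that the promise ``one state is pure, side unknown'' must not interfere with the lifting. It does not: the lifting is oblivious to the structure of the states, and the promise is used only in the correctness analysis of \Cref{prop:no-flag-identity}. The algorithm itself never branches on which state is pure.

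\textbf{Lower bound.} Here I will reduce from a known hard instance. The sample lower bound $\Omega(1/\eps^2)$ for estimating the fidelity of two pure states (\cite[Theorem B.4]{Wang24}) applies directly. Two pure states satisfy the promise that at least one state is pure. So any estimator for our promise problem also solves pure-state fidelity estimation and inherits its lower bound for $0<\eps<1/4$.

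\textbf{Main obstacle.} I expect the main point needing care to be the exact hypotheses of \Cref{lem:query-to-sample}. The simulated algorithm uses controlled versions of $Q_j$ and $Q_j^\dagger$, and the constant error of the simulation must be absorbed into the success probability. I would first confirm that \cite[Theorem~1.5]{TWZ25} permits controlled and inverse queries. If it does not, I would fall back on the amplitude-estimation-specific lifting in \cite[Theorem~1.1]{CWZ25}.
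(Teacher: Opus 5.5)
Your proposal is correct and follows the paper's own argument: the upper bound applies the sample-to-query lifting (\Cref{lem:query-to-sample}) with $q=O(1/\eps)$ to the query estimator of \Cref{thm:query-onePureState-fidelity}, and the lower bound is inherited from \cite[Theorem B.4]{Wang24}, since two pure states satisfy the promise. Your extra remarks are valid details that the paper leaves implicit, namely absorbing the constant simulation error into the success probability and noting that the lifting never uses the promise.
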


\section*{Acknowledgments}
\noindent The work of Yupan Liu was supported by funding from the Swiss State Secretariat for Education, Research and Innovation (SERI). 
The work of Qisheng Wang was supported by startup funding from Shanghai Jiao Tong University. 
ChatGPT was used interactively to proofread the manuscript, draw circuit diagrams in \Cref{figure:Uhlmann-cross-operator,figure:U1-circuit,figure:U0-circuit}, and identify relevant references, while all writing, including mathematical statements and reasoning, was completed by the authors.

\bibliographystyle{alphaurl}
\bibliography{onePureStateFidelity}

\end{document}